\newtheorem{lemma}{Lemma}[section]
\newtheorem{proposition}[lemma]{Proposition}
\newtheorem{corollary}[lemma]{Corollary}
\newtheorem{definition}[lemma]{Definition}
\newtheorem{example1}[lemma]{Example}
\newtheorem{rem1}[lemma]{Remark}
\newtheorem{assumption}[lemma]{Assumption}
\newtheorem{alg1}[lemma]{Algorithm}
\newtheorem{me1}[lemma]{Mechanism}
\newenvironment{remark}{\begin{rem1}\rm}{\end{rem1}}
\newenvironment{example}{\begin{example1}\rm}{\end{example1}}
\newenvironment{alg}{\begin{alg1}\rm}{\end{alg1}}
\newcommand{\bbr}{\mathbb{R}}
\newcommand{\bbn}{\mathbb{N}}
\newcommand{\bbt}{\mathbb{T}}
\newcommand{\ncal}{\mathcal{N}}
\newcommand{\xcal}{\mathcal{X}}
\newcommand{\ind}{\mathbbm{1}}
\begin{document}

\title{Impact of Contingent Payments on Systemic Risk in Financial Networks}
\author{Tathagata Banerjee \thanks{Washington University in St.\ Louis, Department of Electrical and Systems Engineering, St.\ Louis, MO 63130, USA.} \and
Zachary Feinstein \thanks{Washington University in St.\ Louis, Department of Electrical and Systems Engineering, St.\ Louis, MO 63130, USA. \tt{zfeinstein@wustl.edu}}}
\date{\today}
\maketitle
\abstract{
In this paper we study the implications of contingent payments on the clearing wealth in a network model of financial contagion. We consider an extension of the Eisenberg-Noe financial contagion model in which the nominal interbank obligations depend on the wealth of the firms in the network. We first consider the problem in a static framework and develop conditions for existence and uniqueness of solutions as long as no firm is speculating on the failure of other firms. In order to achieve existence and uniqueness under more general conditions, we introduce a dynamic framework. We demonstrate how this dynamic framework can be applied to problems that were ill-defined in the static framework.
}

\section{Introduction}\label{sec:intro}
The global financial crisis of 2007-2009 proved the need to study and understand how failures and losses spread through the financial system.  This effect, in which the distress of one bank puts the financial health of other banks in jeopardy, is called financial contagion.  
In an era of globalization and tight interconnections among the various financial entities, this type of contagion can spread rapidly causing a systemic crisis. Hence a thorough analysis of the different factors and mechanisms are of paramount importance.

However, the 2007-2009 financial crisis proved that not just banks, but insurance companies are also part of the financial system and hence linkages formed between banks and insurance companies can act as potential channels of financial contagion. These linkages are formed and resolved in a way that is different from normal bank loans. A typical example of such a linkage is a credit default swap [CDS]. 
A credit default swap is a contract in which a buyer pays a premium to a seller in order to protect itself against a potential loss due to the occurrence of a credit event that affects the value of the contract's underlying reference obligation, e.g., a corporate or sovereign bond.  The contract specifies the credit events that will trigger payment from the seller to the buyer.  Whereas such instruments can be used to hedge risks, they may also be used for speculative purposes to put a short position on the credit markets.  

The important role that such contingent linkages play is demonstrated by the financial crisis of 2007-2009.  As that crisis unfolded, AIG faced bankruptcy after the failure of Lehman Brothers due to the large payouts it was required to make on its CDS contracts referencing Lehman and mortgage backed securities. When the crisis hit, the sudden calls to pay out the CDS contracts put great pressure on AIG, which traditionally had a thin capital base. Consequently AIG had to be rescued by the U.S.\ Department of Treasury so as to avoid jeopardizing the financial health of firms which bought CDSs from AIG. However, despite the importance of these linkages, current models are unable to account for the conditional payments that an insurance or credit default swap contract would require.  We refer to \cite{BS12} for a preliminary study of the insurance and reinsurance market.

Our modeling follows the setting of the seminal paper of \cite{EN01}.  That work proposes a weighted graph to model the spread of defaults in the financial system. In this model, banks' liabilities are modeled through the edges. The banks use their liquid assets to pay off these liabilities; unpaid liabilities may cause other banks to default as well. Under simple conditions, they provide existence and uniqueness of the clearing payments and develop an algorithm for computing the same. The base model of \cite{EN01} has been extended in many directions to capture complexities in the financial system: Bankruptcy costs have been explored in \cite{E07,RV13,EGJ14,GY14,AW_15,CCY16}, cross-holdings in \cite{E07,EGJ14,AW_15}, extension to illiquid assets in \cite{CFS05,NYYA07,GK10,AFM13,CLY14,AW_15,AFM16,feinstein2015illiquid,feinstein2016leverage,feinstein2017multilayer}. Empirical studies on the spread of contagion has been done in \cite{ELS06,U11,CMS10,GY14}.

As far as the authors are aware, theoretical work on contingent payments and CDS in relation to systemic risk has not been explored much. \cite{SSB17,SSB16b} show that the clearing vector in the presence of generalized CDS contracts is not well-defined and need not exist. They further propose a static setting to model CDS payments and give sufficient conditions on the network topology for existence of a clearing solution. \cite{LPT17} considers such a model in a static framework and proposes a method to rewrite some classes of network topologies as an Eisenberg-Noe system. \cite{PCB14,CM2016} modeled CDS payments, but most of the reference entities are required to be external to the financial system.  \cite{minca2018reinsurance} modeled reinsurance networks and studied the implications of network topologies on existence and uniqueness of the liabilities and clearing payments. A different approach has been taken in \cite{HS12} in which a stochastic setting is used to analyze contagion caused by credit default swaps. The role of credit default swaps in causing financial contagion has been captured in several empirical studies, see e.g.\ \cite{PRY16,SGGS10}. 

The current work aims to provide a generalized theoretical framework in which to study credit default swaps and other contingent payments in the Eisenberg-Noe setting.  We focus on existence and uniqueness of the clearing payments under contingent payments without presupposing the nature of those payments or strong assumptions on the network topology.  This is in contrast to the aforementioned literature on CDS network models in which there is no guarantee that the realized networks would obey the required conditions. Hence it is paramount to develop results for a general network, irrespective of the topology. We do this by first considering the problem in a static framework where all claims are settled simultaneously.  In such a setting we find that uniqueness of the clearing solution follows so long as no firm is ``speculating'' on another firm's failure.
However, with speculation the problem in a static setting no longer satisfies the sufficient mathematical properties for uniqueness. In order to overcome this issue, we introduce a dynamic framework. This setting ensures both existence and uniqueness of a clearing solution under the usual conditions from \cite{EN01}. This dynamic framework is similar to the discrete time systems considered in \cite{BBF18,CC15}.

This work is organized in the following way: First, in Section~\ref{sec:setting}, we will introduce the mathematical and financial setting. In Section~\ref{sec:insurance}, we develop the static framework for incorporating contingent payments such as insurance and CDS, provide results on existence and develop conditions for uniqueness that are intimately related to considerations of insurance versus speculation. Further we demonstrate some shortcomings inherent to the static framework with contingent payments. In Section~\ref{sec:dynamic}, we introduce a discrete time dynamic framework and discuss existence and uniqueness results. Additionally we demonstrate how this framework can be applied to problems that were ill-defined in the static framework through numerical examples.

\section{Background}\label{sec:setting}
We begin with some simple notation that will be consistent for the entirety of this paper.  Let $x,y \in \bbr^n$ for some positive integer $n$, then 
\[x \wedge y = \left(\min(x_1,y_1),\min(x_2,y_2),\ldots,\min(x_n,y_n)\right)^\top,\] $x^- = -(x \wedge 0)$, and $x^+ = (-x)^-$.

Throughout this paper we will consider a network of $n$ financial institutions.  We will denote the set of all banks in the network by $\ncal := \{1,2,\ldots,n\}$.  Often we will consider an additional node $0$, which encompasses the entirety of the financial system outside of the $n$ banks; this node $0$ will also be referred to as society or the societal node.  The full set of institutions, including the societal node, is denoted by $\ncal_0 := \ncal \cup \{0\}$. We refer to \cite{feinstein2014measures,GY14} for further discussion of the meaning and concepts behind the societal node.

We will be extending the model from \cite{EN01} in this paper.  In that work, any bank $i \in \ncal$ may have obligations $L_{ij} \geq 0$ to any other firm or society $j \in \ncal_0$.  We will assume that no firm has any obligation to itself, i.e., $L_{ii} = 0$ for all firms $i \in \ncal$, and the society node has no liabilities at all, i.e., $L_{0j} = 0$ for all firms $j \in \ncal_0$.  Thus the \emph{total liabilities} for bank $i \in \ncal$ is given by $\bar p_i := \sum_{j \in \ncal_0} L_{ij} \geq 0$ and relative liabilities $\pi_{ij} := \frac{L_{ij}}{\bar p_i}$ if $\bar p_i > 0$ and arbitrary otherwise; for simplicity, in the case that $\bar p_i = 0$, we will let $\pi_{ij} = \frac{1}{n}$ for all $j \in \ncal_0 \backslash \{i\}$ and $\pi_{ii} = 0$.  
On the other side of the balance sheet, all firms are assumed to begin with some amount of assets $x_i \geq 0$ for all firms $i \in \ncal_0$.
The resultant \emph{clearing payments}, under a pro-rata payments environment, satisfy the fixed point problem
\begin{equation}\label{eq:EN-p}
p = \bar p \wedge \left(x + \Pi^\top p\right).
\end{equation}
That is, each bank pays the minimum of what it owes: $\bar p_i$ and what it has: $x_i + \sum_{j \in \ncal} \pi_{ji} p_j$.
The resultant vector of \emph{wealths} for all firms is given by
\begin{equation}\label{eq:equity}
V = x + \Pi^\top p - \bar p.
\end{equation}
Noting that these payments can be written as a simple function of the wealths ($p = [\bar p - V^-]^+$), we provide the following proposition.  We refer also to \cite{veraart2017distress,barucca2016valuation,BBF18} for similar notions of utilizing clearing wealth instead of clearing payments.
\begin{proposition}[Proposition 2.1 of \cite{BBF18}]\label{prop:EN-V}
$p \in [0,\bar p]$ is a clearing payment in the Eisenberg-Noe setting if and only if $p = [\bar p - V^-]^+$ for some $V \in \bbr^{n+1}$ satisfying the following fixed point problem
\begin{equation}\label{eq:EN-V}
V = x + \Pi^\top [\bar p - V^-]^+ - \bar p.
\end{equation}
Vice versa, $V \in \bbr^{n+1}$ is a clearing wealths if and only if $V$ is defined as in \eqref{eq:equity} for some clearing payments $p \in [0,\bar p]$ as defined in the fixed point problem \eqref{eq:EN-p}.
\end{proposition}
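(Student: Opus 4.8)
The plan is to reduce both equivalences to a single algebraic identity relating the lattice operation $\wedge$ to the positive-part operator, and then to track the sign constraints that make the outer $[\cdot]^+$ in \eqref{eq:EN-V} redundant. The identity I would record first is the componentwise relation $\bar p \wedge y = \bar p - (\bar p - y)^+$, valid for every $y \in \bbr^{n+1}$; combined with the convention $(-z)^+ = z^-$ coming from the definitions $z^- = -(z \wedge 0)$ and $z^+ = (-z)^-$, this is essentially the only calculus the argument requires.

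For the forward direction, suppose $p \in [0,\bar p]$ solves \eqref{eq:EN-p} and set $V := x + \Pi^\top p - \bar p$ as in \eqref{eq:equity}, so that $x + \Pi^\top p = V + \bar p$. Substituting into \eqref{eq:EN-p} and applying the identity gives $p = \bar p \wedge (V + \bar p) = \bar p - (-V)^+ = \bar p - V^-$. Since $p \ge 0$, this forces $\bar p - V^- \ge 0$ and hence $p = [\bar p - V^-]^+$; plugging this expression for $p$ back into the definition of $V$ yields exactly the fixed point \eqref{eq:EN-V}.

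For the reverse direction, suppose $V \in \bbr^{n+1}$ solves \eqref{eq:EN-V} and define $p := [\bar p - V^-]^+$. Because $V^- \ge 0$ and $\bar p \ge 0$ we immediately read off $p \in [0,\bar p]$. The crux is to verify \eqref{eq:EN-p}, and this is where the only genuine subtlety lies: I must show the outer positive part is not actually truncating anything, i.e.\ that $\bar p - V^- \ge 0$. This follows from a lower bound on wealth. Reading \eqref{eq:EN-V} as $V + \bar p = x + \Pi^\top p$ and using $x \ge 0$, $\Pi \ge 0$, and $p \ge 0$ gives $V + \bar p \ge 0$, so $-V \le \bar p$ and therefore (using also $\bar p \ge 0$) we get $V^- \le \bar p$. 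Intuitively, a firm's wealth can never fall below the negative of its total liabilities because its incoming payments are nonnegative. With $\bar p - V^- \ge 0$ in hand, the identity delivers $\bar p \wedge (x + \Pi^\top p) = \bar p \wedge (V + \bar p) = \bar p - V^- = [\bar p - V^-]^+ = p$, which is precisely \eqref{eq:EN-p}.

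Finally, the ``vice versa'' assertion about clearing wealths is a repackaging of the two implications. The reverse direction shows that every solution $V$ of \eqref{eq:EN-V} equals $x + \Pi^\top p - \bar p$ for the clearing payment $p = [\bar p - V^-]^+$, hence is a clearing wealth in the sense of \eqref{eq:equity}; the forward direction shows that every wealth vector arising via \eqref{eq:equity} from a clearing payment solves \eqref{eq:EN-V}. I expect no further obstacle here beyond noting explicitly that the maps $p \mapsto V$ and $V \mapsto p$ constructed above are mutually inverse on their respective domains, so the correspondence between clearing payments and clearing wealths is a bijection.
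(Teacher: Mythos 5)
Your proof is correct and complete: the lattice identity $\bar p \wedge y = \bar p - (\bar p - y)^+$, together with the observation that $x \geq 0$, $\Pi \geq 0$, $p \geq 0$ force $V + \bar p = x + \Pi^\top p \geq 0$ and hence $V^- \leq \bar p$, is exactly the calculus needed to pass between \eqref{eq:EN-p} and \eqref{eq:EN-V} in both directions. Note that this paper does not itself prove the proposition (it is imported from \cite{BBF18}), but your argument is the standard one used there, so there is nothing to flag beyond agreement.
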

Due to the equivalence of the clearing payments and clearing wealths provided in Proposition~\ref{prop:EN-V}, we are able to consider the Eisenberg-Noe system as a fixed point of equity and losses rather than payments.

In \cite{EN01} results for the existence and uniqueness of the clearing payments (and thus for the clearing wealths as well) are provided.  In fact, it can be shown that there exists a unique clearing solution in the Eisenberg-Noe framework so long as $L_{i0} > 0$ for all firms $i \in \ncal$.  We will take advantage of this result later in this paper.  This is a reasonable assumption (as discussed in, e.g., \cite{GY14,feinstein2014measures}) as obligations to society include, e.g., deposits to the banks.

\begin{remark}\label{illiquid}
The analysis presented in this paper can be extended to include illiquid assets as discussed in, e.g., \cite{CFS05,AFM16,feinstein2015illiquid,feinstein2016leverage}, to include financial derivatives on illiquid assets.  This would allow for obligations to depend on the price of the illiquid assets, e.g., for hedging using put options.
\end{remark}

\section{Simultaneous network clearing with contingent payments}\label{sec:insurance}

\subsection{General setting}\label{sec:insurance-setting}
Let us now consider the case when the nominal liabilities between financial institutions depend explicitly on the wealths of the firms.  This is, for instance, the case with insurance, credit default swaps, reserve requirements with a central bank, or the default waterfall enacted by central counterparties; see Examples \ref{ex:insurance}-\ref{ex:CCP} for more details of those cases.  

As a general setting, this corresponds to the situation in which the nominal liabilities $L_{ij}: \bbr^{n+1} \to \bbr_+$ from bank $i \in \ncal$ to $j \in \ncal_0$ is a mapping from the vector of bank wealths into the obligations; as mentioned above, we will assume that $L_{0i} \equiv 0$ and $L_{ii} \equiv 0$ for all firms $i \in \ncal$.  That is, dependent on the actualized wealths $V \in \bbr^{n+1}$ of all institutions in the system, the nominal liabilities will adjust to be $L(V) \in \bbr^{(n+1) \times (n+1)}_+$ a nonnegative matrix with 0 diagonal.  In the case that the societal node is not desired, then this can be incorporated by setting $L_{i0} \equiv 0$ for all $i \in \ncal$. Thus we consider a static setting for these contingent payments, i.e., we assume all claims are resolved simultaneously and the nominal liabilities $L$ account for all layers of contingent claims.

\begin{assumption}\label{ass:L-cont}
The nominal liabilities $L_{ij}: \bbr^{n+1} \to \bbr_+$ are bounded with upper bound $\bar L_{ij} \geq 0$ for all institutions $i,j \in \ncal_0$. 
\end{assumption}

\begin{example}\label{ex:insurance}
Consider a static network model of external assets $x \in \bbr^{n+1}_+$ and liabilities $L^0 \in \bbr^{(n+1) \times (n+1)}_+$ (with corresponding total liabilities $\bar p^0$ and relative liabilities $\Pi^0$).  Firm $j \in \ncal_0$ purchased an \emph{insurance contract} from firm $i \in \ncal$ on the event that firm $k \in \ncal$ does not pay its obligations in full to firm $j$; this is encoded in the nominal liabilities function 
\[L_{ij}(V) = L_{ij}^0 + \sum_{k \in \ncal} \eta_{ij}^k(V) \frac{L_{kj}(V)}{\sum_{l \in \ncal_0} L_{kl}(V)} V_k^-.\]
In the above equation we set the parameter $\eta_{ij}^k: \bbr^{n+1} \to [0,1]$ to denote the level of insurance offered by the contract.  Logically we impose the condition that $\eta_{ij}^i \equiv 0$ for all firms $i \in \ncal$ and $j \in \ncal_0$ so as a firm is not insuring against itself.  We further impose a tree structure on the insurance, that is insurers will not directly insure nonpayments from other insurers in a cyclical manner.  This is codified in the condition that $\eta_{ij}^{k_1}\eta_{k_1j}^{k_2}\dots\eta_{k_mj}^i = 0$ for all $i,j,k_1,\cdots,k_m \in \ncal_0$.  This tree structure immediately implies the uniqueness of the nominal liabilities matrix $L: \bbr^{n+1} \to \bbr^{(n+1) \times (n+1)}_+$.  These conditions are related to the ``green core'' system in \cite{SSB17}.
In the case that $\eta_{ij}^k(V) > 1$, this is the situation of over-insurance which no longer is considered ``insurance'' in the strict legal sense; see Example~\ref{ex:CDS} for this more general setting.  More generally, over-insurance is implied by the condition $\sum_{i \in \ncal} \eta_{ij}^k(V) > 1$, i.e., the total amount of insurance on any payment should be bounded by 1. 
Though explained as a single insurance contract, multiple such contracts may be layered so that one financial institution may have insurance against the failures of multiple counterparties.
The simplest insurance contracts are such that $\eta_{ij}^k \equiv \hat\eta_{ij}^k \in [0,1]$, though by considering the functional we allow for situations in which insurance only pays losses exceeding a threshold $\tau_{ij}^k$, e.g.,
\[\eta_{ij}^k(V) = \hat\eta_{ij}^k\frac{\left[L_{kj}(V) - \frac{L_{kj}(V)}{\sum_{l \in \ncal_0} L_{kl}(V)} V_k^- + \tau_{ij}^k\right]^+}{\left[L_{kj}(V) - \frac{L_{kj}(V)}{\sum_{l \in \ncal_0} L_{kl}(V)} V_k^-\right]^+}.\]
Also within this framework we allow for reinsurance contracts; that is, insurance contracts that pay out once payments from an insurer reach a certain threshold so as to contain the losses for the insurer itself.
\end{example}
\begin{example}\label{ex:CDS}
As in Example \ref{ex:insurance}, consider an initial static network model with asset and liability parameters $(x,L^0)$.  Though similar to an insurance policy, a firm may purchase credit default swaps.  Firm $j \in \ncal_0$ purchased a \emph{credit default swap [CDS]} from firm $i \in \ncal$ on the failure of firm $k \in \ncal$ is encoded in the formula
\[L_{ij}(V) = L_{ij}^0 + \sum_{k \in \ncal} \eta_{ij}^k(V)V_k^-.\]
In this example we define $\eta_{ij}^k: \bbr^{n+1} \to \bbr_+$ without restriction on the number of swaps purchased or the existence of an insurable interest.
In such a way we allow for so-called ``naked'' CDSs where the payments to firm $j$ are not based on any insurable interest in firm $k$.
\end{example}
\begin{example}\label{ex:stability}
As in Example \ref{ex:insurance}, consider an initial static network model with asset and liability parameters $(x,L^0)$.  We will now consider a system in which all firms must pay towards a centralized stability fund.  That is, prior to the start some amount $y \in [0,x]$ of the external assets are provided from each firm used in the stability fund.  In the case of failures this fund would support the defaulting firms.  Consider this centralized fund to be denoted as node $B$ and let $\ncal_B = \ncal_0 \cup \{B\}$.  This system can be described in which the bailout fund is capitalized prior to clearing or as part of clearing.  If the bailout is collected prior to clearing than this system is described by external assets of $x_i - y_i \geq 0$ for all firms $i \in \ncal$ and $\sum_{i \in \ncal} y_i \geq 0$ for the stability fund node $B$ and liabilities of
\[L_{ij}(V) = L_{ij}^0 \; \forall i,j \in \ncal_0, \qquad L_{Bi}(V) = V_i^- \; \forall i \in \ncal, \qquad L_{iB}(V) = 0 \; \forall i \in \ncal.\]
The payments to this stability fund can also be made as a part of clearing.  In this case the external assets are $x$ and liabilities are
\[L_{ij}(V) = L_{ij}^0 \; \forall i,j \in \ncal_0, \qquad L_{Bi}(V) = V_i^- \; \forall i \in \ncal, \qquad L_{iB}(V) = y_i \; \forall i \in \ncal.\]
This can be extended further by setting the payments to the stability fund $y$ to itself be a function of the wealth of each institution.  This allows for concepts such as pooled reserve requirements to be encoded into our general framework.
\end{example}
\begin{example}\label{ex:CCP}
The final general conceptual example we wish to present is the situation of introducing a \emph{central counterparty [CCP]}.  In this setting, the network topology follows a star shape, i.e., firms only have liabilities to and from some centralized CCP node.  The true CCP rules, however, also include what is called a default waterfall.  The default waterfall kicks in when the CCP is unable to pay out in full through the initial collected liabilities and margin payments.  In such a case the remaining solvent firms are forced to provide more liquidity to the CCP node.  In a broad sense, this fits within the general framework considered herein as the obligations to the CCP are directly dependent on the wealths of all firms in the system.  CCPs are described in more detail in \cite{AB2014ccp,Murphy2012ccp,cont2015ccp,CM2016}.
\end{example}

As in the construction of the Eisenberg-Noe setting \cite{EN01}, the total and relative liabilities will implicitly be functions of the system wealths as well, i.e.,
\begin{align}
\label{eq:pbar} \bar p_i(V) &= \sum_{j \in \ncal_0} L_{ij}(V)\\
\label{eq:a} \pi_{ij}(V) &= \begin{cases}\frac{L_{ij}(V)}{\bar p_i(V)} &\text{if } \bar p_i(V) > 0\\ \frac{1}{n} &\text{if } \bar p_i(V) = 0, \; i \neq j\\ 0 & \text{if } \bar p_i(V) = 0, \; i = j\end{cases}
\end{align}
for firms $i,j \in \ncal_0$ and system equities $V \in \bbr^{n+1}$.

With this contingent setting we can define the extension of the Eisenberg-Noe framework as the fixed point problem
\begin{equation}\label{eq:EN-insurance}
V = x + \Pi(V)^\top [\bar p(V) - V^-]^+ - \bar p(V).
\end{equation}
That is, the wealths are the sum of external assets and payments from other banks minus the payments owed.  This could equivalently be defined directly as the payments as is done in \cite{EN01}, we choose to consider the wealths directly in this work as it is easier to consider the examples, e.g., insurance payments.  The realized payments can be defined (as discussed previously without contingent payments) by $p = [\bar p(V) - V^-]^+$.

\begin{proposition}\label{prop:V-bounded}
Under Assumption \ref{ass:L-cont}, any fixed point wealth $V \in \bbr^{n+1}$ of \eqref{eq:EN-insurance} lies within the compact set $\prod_{i = 1}^n [x_i-\sum_{j \in \ncal_0} \bar L_{ij} , x_i + \sum_{j \in \ncal} \bar L_{ji}]$. 
\end{proposition}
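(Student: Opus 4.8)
The plan is to prove both bounds componentwise, working directly from the fixed-point identity \eqref{eq:EN-insurance}. Written out coordinate by coordinate for a bank $i \in \ncal$, the fixed point reads
\begin{equation*}
V_i = x_i + \sum_{j \in \ncal_0} \pi_{ji}(V)\,[\bar p_j(V) - V_j^-]^+ - \bar p_i(V).
\end{equation*}
The incoming-payment sum and the total-liability term $\bar p_i(V)$ are each nonnegative, so the two bounds come from discarding one of them and controlling the other. No fixed-point machinery is needed; only Assumption~\ref{ass:L-cont} and elementary monotonicity of $[\cdot]^+$ and $\wedge$.

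For the lower bound I would drop the (nonnegative) inflow sum and use $\bar p_i(V) = \sum_{j \in \ncal_0} L_{ij}(V) \le \sum_{j \in \ncal_0} \bar L_{ij}$ from Assumption~\ref{ass:L-cont}, giving $V_i \ge x_i - \bar p_i(V) \ge x_i - \sum_{j \in \ncal_0} \bar L_{ij}$, which is exactly the left endpoint of the stated box.

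For the upper bound I would instead discard the nonpositive $-\bar p_i(V)$ term and bound the inflow from above. The key observation is that the realized payment of each bank $j$ never exceeds its total obligation: since $V_j^- \ge 0$ and $\bar p_j(V) \ge 0$, we have $[\bar p_j(V) - V_j^-]^+ \le \bar p_j(V)$, hence $V_i \le x_i + \sum_{j \in \ncal_0} \pi_{ji}(V)\,\bar p_j(V)$. The identity $\pi_{ji}(V)\,\bar p_j(V) = L_{ji}(V)$ then holds in every case — directly from \eqref{eq:a} when $\bar p_j(V) > 0$, and as the trivial $0 = 0$ when $\bar p_j(V) = 0$, since $\bar p_j(V) = \sum_l L_{jl}(V) = 0$ forces $L_{ji}(V) = 0$. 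The inflow therefore collapses to $\sum_{j \in \ncal_0} L_{ji}(V)$; the convention $L_{0i} \equiv 0$ removes the societal contribution, and $L_{ji}(V) \le \bar L_{ji}$ yields $V_i \le x_i + \sum_{j \in \ncal} \bar L_{ji}$, the right endpoint of the box.

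There is no genuinely hard estimate here; the only points demanding care are bookkeeping. First, one must verify the identity $\pi_{ji}(V)\,\bar p_j(V) = L_{ji}(V)$ survives the degenerate convention in \eqref{eq:a} where $\bar p_j(V) = 0$ and $\pi_{ji}$ is set artificially — which it does, because the relevant factor $\bar p_j(V)$ vanishes. Second, one must track which index set ($\ncal$ versus $\ncal_0$) appears at each step, so that the societal node $j = 0$ drops out of the inflow sum precisely where the stated box, with its asymmetric index sets $\sum_{j \in \ncal_0}$ and $\sum_{j \in \ncal}$, requires.
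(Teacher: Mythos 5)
Your proof is correct and is exactly the argument the paper has in mind: the paper's own proof is the single line ``immediate by the boundedness properties of Assumption~\ref{ass:L-cont},'' and your componentwise estimates (dropping the nonnegative inflow for the lower bound, and using $[\bar p_j(V) - V_j^-]^+ \le \bar p_j(V)$ together with $\pi_{ji}(V)\bar p_j(V) = L_{ji}(V) \le \bar L_{ji}$ for the upper bound) are precisely the routine details being suppressed. Your care with the degenerate case $\bar p_j(V) = 0$ and with the index sets $\ncal$ versus $\ncal_0$ is sound and resolves the only points where one could slip.
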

\begin{proof}
The result is immediate by the boundedness properties of Assumption \ref{ass:L-cont}.
\end{proof}

\begin{corollary}\label{cor:exist-cont}
Under Assumption~\ref{ass:L-cont}, there exists an equilibrium wealth of \eqref{eq:EN-insurance} if $L_{ij}: \bbr^{n+1} \to \bbr_+$ is continuous as a function of wealths for all firms $i,j \in \ncal_0$.
\end{corollary}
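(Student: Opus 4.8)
The plan is to recast equilibrium wealths as fixed points of the self-map
\[
\Phi(V) := x + \Pi(V)^\top [\bar p(V) - V^-]^+ - \bar p(V),
\]
so that \eqref{eq:EN-insurance} reads $V = \Phi(V)$, and then to invoke Brouwer's fixed point theorem. For this I need a nonempty compact convex set that $\Phi$ carries into itself, together with continuity of $\Phi$.

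First I would fix the domain. Let $K$ denote the box of Proposition~\ref{prop:V-bounded}, augmented by the analogous interval $[x_0,\,x_0 + \sum_{j \in \ncal} \bar L_{j0}]$ for the societal coordinate (whose bound follows at once since $L_{0j} \equiv 0$, so $\bar p_0(V) = 0$). As a product of closed bounded intervals, $K$ is nonempty, compact, and convex. I would then observe that the estimate underlying Proposition~\ref{prop:V-bounded} in fact bounds $\Phi(V)$ for \emph{every} $V \in \bbr^{n+1}$, not merely at fixed points: the incoming payment $\sum_{j \in \ncal}\pi_{ji}(V)[\bar p_j(V) - V_j^-]^+$ lies in $[0,\,\sum_{j\in\ncal}\bar L_{ji}]$ because $\pi_{ji}(V)[\bar p_j(V)-V_j^-]^+ \le \pi_{ji}(V)\bar p_j(V) = L_{ji}(V) \le \bar L_{ji}$, while $\bar p_i(V) \in [0,\,\sum_{j\in\ncal_0}\bar L_{ij}]$ by Assumption~\ref{ass:L-cont}. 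Hence $\Phi(\bbr^{n+1}) \subseteq K$, and in particular $\Phi$ restricts to a self-map of $K$.

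The crux is continuity of $\Phi$ on $K$. The external assets $x$ and $\bar p_i(V) = \sum_{j\in\ncal_0} L_{ij}(V)$ are continuous because each $L_{ij}$ is, and $V \mapsto [\bar p(V) - V^-]^+$ is continuous as a composition of continuous maps. The only genuine difficulty is the relative liabilities $\Pi(V)$, which may fail to be continuous exactly where some $\bar p_i(V)$ vanishes, since there $\pi_{ij}$ is set by the degenerate convention of \eqref{eq:a} rather than by the quotient. The key point is that the potential discontinuity is absorbed by the payment it multiplies: writing $g_{ji}(V) := \pi_{ji}(V)[\bar p_j(V) - V_j^-]^+$, on the open set $\{\bar p_j(V) > 0\}$ this equals $\frac{L_{ji}(V)}{\bar p_j(V)}[\bar p_j(V)-V_j^-]^+$ and is continuous, whereas at any $V^\ast$ with $\bar p_j(V^\ast) = 0$ one has $[\bar p_j(V^\ast) - (V^\ast_j)^-]^+ = 0$ and $L_{ji}(V^\ast) = 0$, so from the bound $0 \le g_{ji}(V) \le L_{ji}(V)$ (valid because $[\bar p_j(V)-V_j^-]^+ \le \bar p_j(V)$) and continuity of $L_{ji}$ we get $g_{ji}(V) \to 0 = g_{ji}(V^\ast)$. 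Thus each $g_{ji}$, and hence $[\Pi(V)^\top[\bar p(V)-V^-]^+]_i = \sum_{j\in\ncal} g_{ji}(V)$, is continuous on all of $\bbr^{n+1}$, so $\Phi$ is continuous.

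With $\Phi$ a continuous self-map of the nonempty compact convex set $K$, Brouwer's fixed point theorem yields a fixed point $V^\ast = \Phi(V^\ast)$, which is precisely an equilibrium wealth of \eqref{eq:EN-insurance}. The main obstacle is the continuity step: naively $\Pi(V)$ is discontinuous on the zero set of the total liabilities, and the argument hinges on recognizing that the offending factor is dominated by $L_{ji}(V)$, which itself vanishes there, so no direct control of $\Pi(V)$ near that set is needed.
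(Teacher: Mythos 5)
Your proof is correct and takes essentially the same route as the paper's own (one-line) argument: the bound underlying Proposition~\ref{prop:V-bounded} gives a nonempty compact convex box that the clearing map carries into itself, and Brouwer's fixed point theorem then produces an equilibrium wealth. The only difference is that you spell out what the paper leaves implicit --- in particular the verification that the discontinuity of $\Pi(V)$ on the set where $\bar p_j(V) = 0$ is harmless, since the product $\pi_{ji}(V)[\bar p_j(V) - V_j^-]^+$ is squeezed between $0$ and $L_{ji}(V)$, which vanishes there --- and this detail is handled correctly.
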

\begin{proof}
This follows from the compactness argument of Proposition~\ref{prop:V-bounded} and the Brouwer fixed point theorem.
\end{proof}

Though in Corollary~\ref{cor:exist-cont} we have proven the existence of an equilibrium solution to \eqref{eq:EN-insurance}, this need not be a unique solution.  The following example illustrates a simple network with multiple equilibria.  Further, Corollary~\ref{cor:exist-cont} and \eqref{eq:EN-insurance} implicitly assume that there are no bankruptcy costs.  With such costs (as introduced in \cite{RV13}), Corollary~\ref{cor:exist-cont} will no longer apply.  See also Remark~\ref{rem:bankruptcy} for a discussion on sufficient conditions to guarantee existence of a clearing wealths vector under bankruptcy costs.

\begin{example}\label{ex:nonunique}
Consider the network with $n = 3$ banks, and \emph{without} the societal node. This network is depicted in Figure \ref{fig:nonunique}. Banks 1 and 3 have $x_1 = x_3 = 0$ external assets and bank 2 begins with $x_2 = 3/16$ external assets.  We consider the case in which $L_{23} = L_{32} \equiv 1$ are fixed obligations whereas the first bank has purchased a credit default swap on the third institution defaulting on its obligations from the second institution that pays out $L_{21}(V) = V_3^-$.  No other exposures exist within this system.  The system of wealths must therefore satisfy
\begin{align*}
V_1 &= \frac{V_3^-}{1+V_3^-}(1 + V_3^- - V_2^-)^+\\
V_2 &= \frac{3}{16} + (1-V_3^-)^+ - (1+V_3^-)\\ 
V_3 &= \frac{1}{1+V_3^-}(1 + V_3^- - V_2^-)^+ - 1.
\end{align*}
It can be shown that the following are both equilibrium wealths of the contingent network:
\begin{itemize}
\item $V = \left(0 \; , \; 3/16 \; , \; 0\right)^\top$, i.e., payments are given by $p = \bar p(V) - V^- = \left(0 \; , \; 1 \; , \; 1\right)^\top$, and
\item $V = \left(3/16 \; , \; -21/16 \; , \; -3/4\right)^\top$, i.e., payments are given by $p = \bar p(V) - V^- = \left(0 \; , \; 7/16 \; , \; 1/4\right)^\top$.
\end{itemize}
\begin{figure}[h!]
\centering
\begin{tikzpicture}
\tikzset{node style/.style={state, minimum width=0.36in, line width=0.5mm, align=center}}
\node[node style] at (0,0) (x1) {$x_1 = 0$};
\node[node style] at (5,0) (x2) {$x_2 = \frac{3}{16}$};
\node[node style] at (10,0) (x3) {$x_3 = 0$};

\draw[every loop, auto=right, line width=0.5mm, >=latex]
(x2) edge[dashed] node {$L_{21}(V) = V_3^-$} (x1)
(x2) edge[bend right=20] node {$L_{23} \equiv 1$} (x3)
(x3) edge[bend right=20] node {$L_{32} \equiv 1$} (x2);
\end{tikzpicture}
\caption{Example~\ref{ex:nonunique}: A graphical representation of the network model with 3 banks which accepts more than one clearing solution.}
\label{fig:nonunique}
\end{figure}
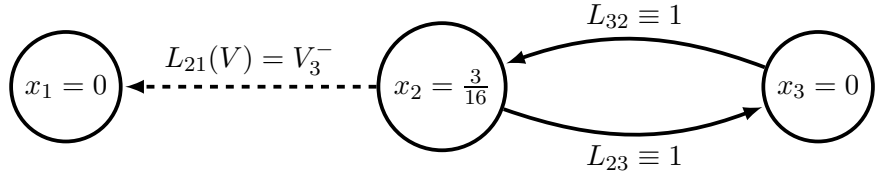
\end{example}

\subsection{A nonspeculative financial network}\label{sec:insurance-nonspeculative}
We will now impose additional properties upon the financial system to have stronger existence results, culminating in uniqueness of the clearing solutions.  These results provide monotonicity of the wealth of the banks in the financial system.  The first of such properties, defined as a nonspeculative property, is provided below in Definition \ref{defn:nonspeculative}.

\begin{definition}\label{defn:nonspeculative}
Firm $i \in \ncal_0$ is called \emph{\textbf{nonspeculative}} if 
\[x_i + \sum_{j \in \ncal} \pi_{ji}(V)[\bar p_j(V) - V_j^-]^+ - \bar p_i(V)\]
is nondecreasing in $V \in \bbr^{n+1}$.  The network $\ncal_0$ is called nonspeculative if all firms $i \in \ncal_0$ are nonspeculative.
\end{definition}

We call the property in Definition \ref{defn:nonspeculative} ``nonspeculative'' as it provides conditions so that firm $i \in \ncal_0$ does not benefit from (i.e., speculate on) the failure of another firm.  We do, however, allow for firm $i$ to \emph{hedge} its exposure to other firms.  This exposure can be either direct or indirect.

\begin{remark}\label{rem:green-core}
The nonspeculative framework considered herein is similar to, and can be considered as an extension of the properties considered in \cite{SSB17}.  In that work, the monotonicity property, in the definition of nonspeculative systems that we consider, is specified for credit default swaps.  Properties on solutions, which we will derive from this nonspeculative property, are considered as a function of the network topology in \cite{SSB17}; in fact, the topological features required in \cite{SSB17} guarantee that the ``green core'' system is inherently nonspeculative.
\end{remark}

\begin{lemma}\label{lemma:nonspeculative}
Under Assumption~\ref{ass:L-cont}, any nonspeculative system has a greatest and least equilibrium wealth $V^\uparrow \geq V^\downarrow$ satisfying \eqref{eq:EN-insurance} existing within the compact space $\prod_{i \in \ncal_0} [x_i - \sum_{j \in \ncal_0} \bar L_{ij} , x_i + \sum_{j \in \ncal} \bar L_{ji}]$.  Additionally, under all clearing vectors the value of the equity of each node of the financial system is the same, that is, if $V$ and $\hat V$ are any two clearing wealths then $V^+ = \hat V^+$.
\end{lemma}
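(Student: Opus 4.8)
The plan is to recognize the right-hand side of the fixed point problem \eqref{eq:EN-insurance} as an order-preserving self-map and to pair this with a conservation identity for $V^+$. Write $\Phi(V) := x + \Pi(V)^\top[\bar p(V) - V^-]^+ - \bar p(V)$, so that clearing wealths are exactly the fixed points of $\Phi$. Its $i$-th coordinate $\Phi_i(V) = x_i + \sum_{j \in \ncal}\pi_{ji}(V)[\bar p_j(V) - V_j^-]^+ - \bar p_i(V)$ is precisely the expression appearing in Definition~\ref{defn:nonspeculative}, so the nonspeculative hypothesis says exactly that $\Phi$ is nondecreasing in $V$, i.e.\ $\Phi$ is order-preserving.

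First I would establish existence of $V^\uparrow$ and $V^\downarrow$ via the Knaster--Tarski fixed point theorem rather than Brouwer. The box $K := \prod_{i \in \ncal_0}[x_i - \sum_{j \in \ncal_0}\bar L_{ij}, \; x_i + \sum_{j \in \ncal}\bar L_{ji}]$ is a complete lattice under the componentwise order. A short computation using $0 \le [\bar p_j(V) - V_j^-]^+ \le \bar p_j(V)$ together with $\sum_{j}\pi_{ji}(V)\bar p_j(V) = \sum_j L_{ji}(V) \le \sum_j \bar L_{ji}$ shows $\Phi(K) \subseteq K$; combined with monotonicity, Tarski's theorem delivers a nonempty complete lattice of fixed points inside $K$, in particular a greatest $V^\uparrow$ and a least $V^\downarrow$ with $V^\uparrow \ge V^\downarrow$, and the property that \emph{every} clearing wealth $V$ is sandwiched as $V^\downarrow \le V \le V^\uparrow$. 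Note this argument uses only monotonicity, so Assumption~\ref{ass:L-cont} (needed to form $K$) suffices and no continuity is required.

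For the equity-invariance claim I would first prove a conservation identity valid at any fixed point $V$. Summing \eqref{eq:EN-insurance} over $i \in \ncal_0$, interchanging the order of summation, and using $\sum_{i \in \ncal_0}\pi_{ji}(V) = 1$ (which holds in all branches of \eqref{eq:a}), $\bar p_0 \equiv 0$, and the relation $\bar p_j(V) - [\bar p_j(V) - V_j^-]^+ = V_j^-$ (valid because $V_j^- \le \bar p_j(V)$ at any fixed point, since $V_j \ge -\bar p_j(V)$), yields $\sum_{i \in \ncal_0} V_i = \sum_{i\in\ncal_0} x_i - \sum_{i\in\ncal_0} V_i^-$, that is $\sum_{i\in\ncal_0} V_i^+ = \sum_{i\in\ncal_0} x_i$. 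Crucially this identity holds for every clearing wealth, independent of the nonspeculative hypothesis; the nonspeculative property enters only through the lattice bounds above (indeed, in the speculative Example~\ref{ex:nonunique} the two clearing vectors share this sum but differ coordinatewise).

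Finally I would close with a squeeze argument. Since the positive part is coordinatewise nondecreasing and $V^\downarrow \le V^\uparrow$, we have $(V^\downarrow)^+ \le (V^\uparrow)^+$ coordinatewise; but both vectors have coordinate sum $\sum_i x_i$ by the conservation identity, and a coordinatewise inequality between two vectors with equal sums must be an equality. Hence $(V^\downarrow)^+ = (V^\uparrow)^+$, and since any clearing $V$ satisfies $V^\downarrow \le V \le V^\uparrow$, monotonicity of the positive part forces $V^+ = (V^\downarrow)^+ = (V^\uparrow)^+$, giving $V^+ = \hat V^+$ for any two clearing wealths. I expect the main obstacle to be the bookkeeping behind the conservation identity --- in particular verifying $V_j^- \le \bar p_j(V)$ so that the payments take the in-full/pro-rata form, and dispatching the degenerate case $\bar p_i(V) = 0$ in \eqref{eq:a} --- after which the Tarski-plus-squeeze step is routine.
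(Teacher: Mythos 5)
Your proposal is correct and follows essentially the same route as the paper's proof: Tarski's fixed point theorem on the compact box (using the nonspeculative monotonicity) to obtain $V^\uparrow \geq V^\downarrow$, the conservation identity $\sum_{i \in \ncal_0} V_i^+ = \sum_{i \in \ncal_0} x_i$ at any clearing vector via $\sum_{i \in \ncal_0}\pi_{ji}(V) = 1$ and $\bar p_0 \equiv 0$, and then the squeeze between the extremal fixed points. If anything, you are more explicit than the paper on points it leaves implicit --- the self-map verification $\Phi(K) \subseteq K$, the inequality $V_j^- \leq \bar p_j(V)$ at fixed points, and the extension from $(V^\uparrow)^+ = (V^\downarrow)^+$ to arbitrary pairs of clearing wealths via the lattice sandwich.
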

\begin{proof}
By the nonspeculative property we can apply the Tarski fixed point theorem to get the existence of a maximal and minimal fixed point $V^\uparrow \geq V^\downarrow$.  By Proposition \ref{prop:V-bounded} we have that such solutions must exist within the provided compact space.

Now we will show the uniqueness of the positive equities by proving that $(V^\uparrow)^+ = (V^\downarrow)^+$.  By definition we know that $(V^\uparrow)^+ \geq (V^\downarrow)^+$, so as in \cite[Theorem 1]{EN01} we will prove that the total positive equity in the system remains constant.  Let $V \in \bbr^{n+1}$ be some equilibrium wealth solution, then since $\bar p_0 \equiv 0$ by definition we recover that
\begin{align*}
\sum_{i \in \ncal_0} V_i^+ &= \sum_{i \in \ncal_0} \left(x_i + \sum_{j \in \ncal} \pi_{ji}(V) [\bar p_j(V) - V_j^-]^+ - \bar p_i(V)\right)^+\\
&= \sum_{i \in \ncal_0} \left(x_i + \sum_{j \in \ncal} \pi_{ji}(V) [\bar p_j(V) - V_j^-]^+ - [\bar p_i(V) - V_i^-]^+\right)\\
&= \sum_{i \in \ncal_0} x_i + \sum_{j \in \ncal} [\bar p_j(V) - V_j^-]^+ \sum_{i \in \ncal_0} \pi_{ji}(V) - \sum_{i \in \ncal_0} [\bar p_i(V) - V_i^-]^+\\
&= \sum_{i \in \ncal_0} x_i + \sum_{j \in \ncal} [\bar p_j(V) - V_j^-]^+ - \sum_{i \in \ncal_0} [\bar p_i(V) - V_i^-]^+\\
&= \sum_{i \in \ncal_0} x_i.
\end{align*}
Therefore $\sum_{i \in \ncal_0} (V_i^\uparrow)^+ = \sum_{i \in \ncal_0} (V_i^\downarrow)^+$ and thus $(V^\uparrow)^+ = (V^\downarrow)^+$.
\end{proof}

\begin{remark}\label{rem:bankruptcy}
The inclusion of bankruptcy costs to this setting, in much the same way as accomplished in \cite{RV13,SSB17}, would guarantee the existence of a maximal and minimal clearing wealths vector under the assumptions of Lemma~\ref{lemma:nonspeculative}.  Much as in \cite{SSB17}, without the nonspeculative assumption, existence of a solution may not exist since Corollary~\ref{cor:exist-cont} will no longer apply.
\end{remark}

We will now give additional properties for the societal node $0$ to satisfy.
\begin{assumption}\label{ass:society}
All firms $i \in \ncal$ have strictly positive obligations to society, i.e., $L_{i0}: \bbr^{n+1} \to \bbr_{++}$.  Additionally, the obligations to society depend only on the negative wealths of all firms, i.e., $L_{i0}(V) = L_{i0}(-V^-)$ for all $i \in \ncal$.
\end{assumption}

\begin{definition}\label{defn:strictly-nonspeculative}
The societal node is called \emph{\textbf{strictly nonspeculative}} if 
\[\sum_{j \in \ncal} \pi_{j0}(V)[\bar p_j(V) - V_j^-]^+\]
is strictly increasing in $V \in \bbr^{n+1}_-$.  The network $\ncal_0$ is called strictly nonspeculative if the system  is nonspeculative and the societal node is strictly nonspeculative.
\end{definition}
We call this property strictly nonspeculative since it provides the condition that society does strictly worse as any firm defaults by any additional amount.  This requires that society can never perfectly hedge its risk, and as a consequence is strictly not speculating on any firm's failure.  This is a reasonable property as society should always be exposed to banking failures to some degree through, e.g., deposits and the payments necessary for deposit insurance.

\begin{corollary}\label{cor:nonspeculative}
Under Assumptions~\ref{ass:L-cont} and \ref{ass:society}, any strictly nonspeculative system has a unique equilibrium wealth of \eqref{eq:EN-insurance} existing within the compact space $\prod_{i \in \ncal_0} [x_i - \sum_{j \in \ncal_0} \bar L_{ij} , x_i + \sum_{j \in \ncal} \bar L_{ji}]$. 
\end{corollary}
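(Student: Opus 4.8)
The plan is to build directly on Lemma~\ref{lemma:nonspeculative}. Since a strictly nonspeculative system is in particular nonspeculative, that lemma already supplies a greatest and least equilibrium wealth $V^\uparrow \geq V^\downarrow$ inside the stated compact box, and by the Tarski argument every equilibrium wealth $V$ of \eqref{eq:EN-insurance} is trapped as $V^\downarrow \leq V \leq V^\uparrow$. Hence uniqueness reduces to the single claim $V^\uparrow = V^\downarrow$. Moreover the same lemma already gives $(V^\uparrow)^+ = (V^\downarrow)^+$, so it remains only to prove $(V^\uparrow)^- = (V^\downarrow)^-$; combining the two then yields $V^\uparrow = (V^\uparrow)^+ - (V^\uparrow)^- = (V^\downarrow)^+ - (V^\downarrow)^- = V^\downarrow$.

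First I would extract the consequence of equal positive parts at the societal node. Because society carries no liabilities ($\bar p_0 \equiv 0$), its clearing wealth is $V_0 = x_0 + \sum_{j \in \ncal} \pi_{j0}(V)[\bar p_j(V) - V_j^-]^+ \geq 0$, so $V_0 = V_0^+$. Equality of positive parts forces $V_0^\uparrow = V_0^\downarrow$, and subtracting $x_0$ gives
\[
\sum_{j \in \ncal} \pi_{j0}(V^\uparrow)[\bar p_j(V^\uparrow) - (V_j^\uparrow)^-]^+ = \sum_{j \in \ncal} \pi_{j0}(V^\downarrow)[\bar p_j(V^\downarrow) - (V_j^\downarrow)^-]^+,
\]
i.e.\ society receives exactly the same aggregate inflow under the two equilibria.

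Next I would invoke Assumption~\ref{ass:society}. Since the obligations to society satisfy $L_{i0}(V) = L_{i0}(-V^-)$, the societal inflow on both sides above is a function of the negative wealths alone; writing it as $h(-V^-)$, the displayed identity reads $h(-(V^\uparrow)^-) = h(-(V^\downarrow)^-)$, with both arguments lying in $\bbr^{n+1}_-$. But $V^\uparrow \geq V^\downarrow$ gives $-(V^\uparrow)^- \geq -(V^\downarrow)^-$ in $\bbr^{n+1}_-$, and Definition~\ref{defn:strictly-nonspeculative} says $h$ is strictly increasing there. A strictly increasing scalar map cannot take equal values at two distinct ordered points, so $-(V^\uparrow)^- = -(V^\downarrow)^-$, that is $(V^\uparrow)^- = (V^\downarrow)^-$. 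Together with the equality of positive parts this gives $V^\uparrow = V^\downarrow$, and by the sandwiching above the equilibrium wealth is unique within the compact box from Proposition~\ref{prop:V-bounded}.

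The step I expect to be the main obstacle is the reduction of the societal inflow to a genuine function of $-V^-$ on which the strict monotonicity of Definition~\ref{defn:strictly-nonspeculative} can be applied. Assumption~\ref{ass:society} controls the liabilities $L_{j0}$ directly, but the inflow also carries the pro-rata factor $\pi_{j0}(V)$ and the recovery term $[\bar p_j(V) - V_j^-]^+$, whose dependence on $V$ must be shown to factor through $-V^-$ as well; here the strict positivity $L_{i0}>0$ from Assumption~\ref{ass:society} is what keeps $\bar p_j(V)>0$ and the shares well-defined. A convenient simplification is that $V^\uparrow$ and $V^\downarrow$ already coincide on every coordinate where they are positive, so the comparison only ever involves the nonpositive coordinates, which is precisely the regime in which the strict monotonicity is posited. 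I would pin down this reduction carefully before invoking strict monotonicity, since everything else in the argument is a direct inheritance from Lemma~\ref{lemma:nonspeculative}.
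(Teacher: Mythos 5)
Your proposal is correct and follows essentially the same route as the paper's proof: both combine Lemma~\ref{lemma:nonspeculative} (greatest and least fixed points, plus equality of positive parts), the nonnegativity of the societal node's wealth, and the strict monotonicity of Definition~\ref{defn:strictly-nonspeculative} to force $(V^\uparrow)^- = (V^\downarrow)^-$; the paper merely arranges this as a contradiction (assuming $0 > V_i^\uparrow > V_i^\downarrow$ for some $i$ yields $V_0^\uparrow > V_0^\downarrow$, contradicting equal positive parts) while you argue the contrapositive directly. The obstacle you flag---that $\pi_{j0}(V)$ and $[\bar p_j(V) - V_j^-]^+$ must be shown to factor through $-V^-$ before the strict monotonicity, posited only on $\bbr^{n+1}_-$, can be invoked at $V^\uparrow$ and $V^\downarrow$---is a genuine subtlety, but the paper's own proof makes the same leap silently by applying Definition~\ref{defn:strictly-nonspeculative} to points outside $\bbr^{n+1}_-$, so your treatment is, if anything, more careful than the original.
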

\begin{proof}
Using Lemma \ref{lemma:nonspeculative} we have the existence of greatest and least fixed points $V^\uparrow \geq V^\downarrow$.  Let us assume there exists some firm $i \in \ncal$ such that $0 > V_i^\uparrow > V_i^\downarrow$ (otherwise uniqueness is guaranteed by nonexistence of such a firm as well as the uniqueness of the positive equities).  By the definition of the equilibrium wealth of the society node
\begin{align*}
V_0^\uparrow &= \sum_{j \in \ncal} \pi_{j0}(V^\uparrow) [\bar p_j(V^\uparrow) - (V_j^\uparrow)^-]^+\\
&> \sum_{j \in \ncal} \pi_{j0}(V^\downarrow) [\bar p_j(V^\downarrow) - (V_j^\downarrow)^-]^+ = V_0^\downarrow.
\end{align*}
However, immediately we know that the societal node has positive equity, therefore by Lemma \ref{lemma:nonspeculative} it must follow that $V_0^\uparrow = V_0^\downarrow$, which is a contradiction so uniqueness must follow.
\end{proof}

We will now provide a version of the \emph{fictitious default algorithm} (as discussed in, e.g., \cite{EN01,RV13,AFM16,feinstein2015illiquid,AW_15}) for the contingent payments described in \eqref{eq:EN-insurance}.  Algorithm~\ref{alg:fda} provides the maximal fixed point $V^\uparrow$ under the conditions of Lemma~\ref{lemma:nonspeculative}, that is, for a network of nonspeculative banks.  It can easily be modified to provide the minimal fixed point $V^\downarrow$ instead.  Under the assumptions of Corollary \ref{cor:nonspeculative} this algorithm results in the unique equilibrium wealths.

\begin{alg}\label{alg:fda}
Consider the setting of Lemma~\ref{lemma:nonspeculative} such that $L(V) = L(-V^-)$ for every $V \in \bbr^{n+1}$.  The greatest clearing wealths $V^\uparrow$ can be found in at most $n$ iterations of the following algorithm.
Initialize $k=0$, $D^0 = \emptyset$, and $V^0 = x + \Pi(0)^\top \bar p(0) - \bar p(0)$.
\begin{enumerate}
\item \label{FDA:2} Increment $k=k+1$;
\item Denote the set of insolvent banks by
$D^k=\{i \in \ncal \; | \; V_i^{k-1}<0\}$;
\item \label{alg:fda-terminate} If $D^k=D^{k-1}$ then terminate;
\item Define the matrix  $\Lambda \in \{0,1\}^{(n+1) \times (n+1)}$ so that \[\Lambda_{ij} = \begin{cases} 1 &\text{if } i = j \in D^k \\ 0 &\text{else}\end{cases};\]
\item \label{alg:fda-fixedpt} $V^k = \hat V$ is the maximal solution of the following fixed point problem:
\begin{align*}
\hat V = x + \Pi(\Lambda \hat V)^\top [\bar p(\Lambda \hat V)+\Lambda \hat V]^+ - \bar p(\Lambda \hat V)
\end{align*}
in the domain $\prod_{i \in \ncal_0} [x_i - \sum_{j \in \ncal_0} \bar L_{ij} , x_i + \sum_{j \in \ncal} \bar L_{ji}]$;
\item Go back to step \ref{FDA:2}.
\end{enumerate}
\end{alg}

In Algorithm~\ref{alg:fda} at most $n$ iterations are needed, as opposed to $n+1$ as would generally be stated for the fictitious default algorithm of \cite{EN01}.  This is due to the fact that, by definition, the societal node $0$ has no obligations and therefore cannot default.  The additional condition that $L(V) = L(-V^-)$ for Algorithm~\ref{alg:fda} corresponds to the case in which the nominal liabilities only depend on the set of solvent institutions and the shortfall of each insolvent institution.  This is satisfied for, e.g., CDSs as described in Example~\ref{ex:CDS}.

We now consider a simple sensitivity analysis of the clearing wealths $V$ under uncertainty in the initial endowments $x \in \bbr^{n+1}_+$. In so doing we are able to consider similar comparative statics results as in Lemma 5 of \cite{EN01}. 
\begin{proposition}\label{prop:continuous}
Consider the setting of Corollary~\ref{cor:nonspeculative} such that the nominal liabilities $L_{ij}: \bbr^{n+1} \to \bbr_+$ are continuous for every pair of firms $i,j \in \ncal_0$.  The unique clearing wealths $V: \bbr^{n+1}_+ \to \bbr^{n+1}$ are continuous and nondecreasing as a function of the initial endowments $x \in \bbr^{n+1}_+$.
\end{proposition}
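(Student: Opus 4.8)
The plan is to denote the right-hand side of \eqref{eq:EN-insurance} by the operator
\[\Phi_x(V) := x + \Pi(V)^\top [\bar p(V) - V^-]^+ - \bar p(V),\]
so that $V(x)$ is the unique fixed point of $\Phi_x$ guaranteed by Corollary~\ref{cor:nonspeculative}. Two observations drive the argument. First, by the nonspeculative property (Definition~\ref{defn:nonspeculative}) each component $\Phi_x(V)_i$ is nondecreasing in $V$, so $\Phi_x$ is a monotone self-map of any sufficiently large box in $\bbr^{n+1}$; by Proposition~\ref{prop:V-bounded} it in fact maps all of $\bbr^{n+1}$ into a compact box. Second, $\Phi_x(V)$ is manifestly nondecreasing in the parameter $x$, since $x$ enters additively.

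\emph{Monotonicity.} I would fix $x \le x'$ and write $V=V(x)$, $V'=V(x')$. Since $x \mapsto \Phi_x(V)$ is nondecreasing, $V = \Phi_x(V) \le \Phi_{x'}(V)$, so $V$ is a post-fixed point of the monotone map $\Phi_{x'}$. Choosing a box $L$ large enough to contain both $V$ and the compact range of $\Phi_{x'}$ (Proposition~\ref{prop:V-bounded}), $\Phi_{x'}$ is a monotone self-map of the complete lattice $L$, and Tarski's fixed point theorem (as already invoked in Lemma~\ref{lemma:nonspeculative}) identifies its greatest fixed point with the supremum of all post-fixed points, which therefore dominates $V$. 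By uniqueness (Corollary~\ref{cor:nonspeculative}) that greatest fixed point is exactly $V'$, whence $V \le V'$. This establishes that $V$ is nondecreasing in $x$.

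\emph{Continuity.} I would fix $x_0 \in \bbr^{n+1}_+$ and a bounded neighborhood $U$ of it; by Proposition~\ref{prop:V-bounded} there is a single compact box $K$ containing $V(x)$ for every $x \in U$. The key preliminary step is that $\Phi: U \times K \to K$ is jointly continuous in $(x,V)$. Continuity of $\bar p_i(V) = \sum_{j \in \ncal_0} L_{ij}(V)$ is immediate from the continuity hypothesis on $L$; the delicate point is continuity of $\Pi(V)$, whose entries $L_{ij}(V)/\bar p_i(V)$ could blow up where $\bar p_i(V)=0$. Here Assumption~\ref{ass:society} is essential: for every bank $i \in \ncal$ we have $\bar p_i(V) \ge L_{i0}(V) > 0$, and the societal row (where $\bar p_0 \equiv 0$) never enters $\Phi$ because the sum defining $\Phi_x(V)_i$ runs only over $j \in \ncal$ while $[\bar p_0(V) - V_0^-]^+ = 0$; hence all relevant relative liabilities are continuous and $\Phi$ is jointly continuous. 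With this in hand, take $x_n \to x_0$ in $U$ and set $V_n = V(x_n) \in K$. Any convergent subsequence $V_{n_k} \to V^\ast$ satisfies $V^\ast = \lim_k \Phi_{x_{n_k}}(V_{n_k}) = \Phi_{x_0}(V^\ast)$ by joint continuity, so $V^\ast$ is a fixed point of $\Phi_{x_0}$ and therefore equals $V(x_0)$ by uniqueness. Since $K$ is compact and every convergent subsequence shares the limit $V(x_0)$, the full sequence converges, giving $V(x_n) \to V(x_0)$ and hence continuity.

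I expect the main obstacle to be the joint-continuity step rather than the order-theoretic monotonicity: one must verify that the normalization producing $\Pi(V)$ introduces no discontinuities, which is precisely where the strict positivity of obligations to society in Assumption~\ref{ass:society} does the work. The fact that the natural compact domain of Proposition~\ref{prop:V-bounded} varies with $x$ is a minor but necessary bookkeeping point, handled by passing to a fixed box $K$ valid on a bounded neighborhood of the point at which continuity is being checked.
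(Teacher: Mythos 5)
Your proposal is correct and takes essentially the same approach as the paper's appendix proof: there, continuity is obtained from compactness (Proposition~\ref{prop:V-bounded}), uniqueness (Corollary~\ref{cor:nonspeculative}), and continuity of $L$ via the closed graph theorem for maps into a compact codomain (restricting $x$ to a compact neighborhood, just as you pass to the fixed box $K$), while monotonicity is obtained by citing Theorem 4 of \cite{MR94}; your subsequence argument is precisely the standard proof of that closed-graph criterion, and your Tarski post-fixed-point argument is precisely the content of the cited comparative-statics result. If anything, you are more explicit than the paper on a point it glosses over---why $\Pi(\cdot)$ is continuous---correctly noting that Assumption~\ref{ass:society} forces $\bar p_i(V) \geq L_{i0}(V) > 0$ for every $i \in \ncal$, while the societal row, where $\bar p_0 \equiv 0$, contributes nothing to the fixed-point map.
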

\begin{proof}
The proof is presented in the appendix.
\end{proof}

\subsection{Shortcomings}\label{sec:disadvantages}
We wish to conclude our discussion of the system with contingent obligations under simultaneous claims by considering shortfalls to this approach.  In the subsequent section we will consider a dynamic approach to overcome these shortcomings, as well the issues on existence and uniqueness considered above and by \cite{SSB17,SSB16b}.

Consider a setting in which a firm takes out an insurance contract on its own failure.  This setting is of particular interest as it is inherently the type of contingent payment owed to a central counterparty as part of the default waterfall in the CCP framework considered in Example~\ref{ex:CCP} or the stability fund discussed in Example~\ref{ex:stability}.  To further simplify this setting, and again to make it relevant with regards to the CCP framework, we will consider the case in which the firm(s) offering insurance have enough assets to make all payments in full.  Allow the firm taking out the insurance contracts to be firm $1$.  If the sum total of all contingent payments are exactly enough to make firm $1$ whole again, i.e., all contingent payments to firm $1$ sum up to $V_1^-$, then in principle firm $1$ will never default.  However, in equilibrium, this is not what the insurance payments will be; in fact, if the insurance payments in a fixed point were to make firm $1$ whole then no insurance payments would be made and the initial shortfall would be realized once more.  Therefore, in equilibrium, it must be the case that firm $1$ will default even if they are paid the insurance.  As a simple demonstrative example, if firm $1$ only has obligations to the societal node (allowing for us to ignore all feedback effects from firm $1$ paying more and having a higher recovery rate through the network) then the insurance will add up to exactly \emph{half} of firm $1$'s initial shortfall in the fixed point as the new shortfall for firm $1$ will be equal to the contingent payments that are being made.  

However, while this conceptual problem with a firm taking out an insurance contract on its own losses is important, it can conceivably be overcome by providing a sufficiently complicated structure to the contingent payments.  A more subtle, but pernicious, flaw is that this contingent payment system is speculative by construction.  Namely, if the wealth of firm $1$ is lowered, no other firm does better (firm $1$ will pay out less and the insurance companies will have higher claims to pay), but firm $1$ itself improves its wealth.  This is due to the nonspeculative property being constructed in which firm $1$ does not directly get hit by its own lower wealth, but would only occur in network effects that would be on the second order, not in evidence in the single iteration of the definition.  Thus, even though the network is constructed from the notion that no firm benefits in the case of defaults (and this would be evidenced in any equilibrium), the monotonicity of the nonspeculative property is a stronger construct that cannot be satisfied so easily from a conceptual standpoint.

The above described problems could, in specific circumstances (e.g., a single insurer and only one contingent payment contract or a ``green core'' system from \cite{SSB17}), be overcome by reformulating the payments appropriately.  However, in the general case with each contract incorporating no speculation from a financial perspective, this system would have the aforementioned shortcomings.  These challenges, along with the inability to deal with speculative systems in general, stem from structural issues in such a static framework.  Specifically, insurance, and contingent payments more generally, are paid on specific claims, not simultaneous to the claim being made. This necessitates a dynamic approach to this problem, which we will discuss in the subsequent section.

\section{Dynamic Framework}\label{sec:dynamic}
As detailed above, the static, simultaneous claims, model presented has both mathematical and economic issues that the authors are not aware of any way to overcome in a general setting.  These problems are associated with the presence of, potentially, infinite cycles.  Much as with \cite{KV16}, these cycles could alternate between two states, particularly for speculative systems.  That is, for instance, insurance is paid out because a bank is insolvent, but because of this insurance payment the firm is no longer insolvent and no payment would be necessary.
As in \cite{KV16}, we will consider an algorithmic approach to this issue.  We thus propose a simple dynamic framework.  Additionally, we consider this setting to be more realistic than the static setting considered above and by \cite{SSB17} as the financial system does not include the payment of, e.g., a CDS on the obligation inherent in that contract.

\subsection{General Setting}
We adapt the framework introduced in \cite{BBF18} for the purposes of constructing a simple dynamic framework for contingent payments. Consider a discrete set of clearing times $\bbt$, e.g., $\bbt = \{0,1,\dots,T\}$ for some (finite) terminal time $T < \infty$ or $\bbt = \bbn$.
For processes we will use the notation from \cite{cont2013ito} such that the process $Z: \bbt \to \bbr^n$ has value of $Z(t)$ at time $t \in \bbt$ and history $Z_t := (Z(s))_{s = 0}^t$.
As an explicit extension to \cite{BBF18}, we consider the external (incoming) cash flow $x:\bbt \times \bbr^{(n+1) \times |\bbt|} \to \bbr^{n+1}_+$ and nominal liabilities $L: \bbt \times \bbr^{(n+1) \times |\bbt|} \to \bbr^{(n+1) \times (n+1)}_+$ to be functions of the clearing time \emph{and} prior wealths.  For simplicity, we will consider $x(t,\cdot) := x(t)$ to be independent of the prior wealths, though it may still depend on time.  The distinguishing feature of this model compared to the static Eisenberg-Noe model (or the static contingent payment model above and in \cite{SSB17}) is that the system parameters may depend on prior times.  For example, if firm $i$ has positive equity at time $t-1$ (i.e., $V_i(t-1) > 0$) then these surplus assets are available to firm $i$ at time $t$ in order to satisfy its obligations.  In the contingent setting, the wealths of all banks at time $t-1$ may affect the obligations due at time $t$ as well.

We note that in \cite{BBF18} all unpaid obligations from a prior time are assumed to roll forward automatically. That is, if firm $i$ has negative wealth at time $t-1$ then the debts that the firm has not yet paid will roll forward in time and be due at the next time point. Under such an assumption, no firm is deemed to default on its obligations until the terminal time $T$.  Herein, with the explicit consideration of the contingent payments, we may ``zero out'' a firm before the terminal date if it is deemed to default in much the same as in, e.g., \cite{CC15}.  While we can incorporate the notion of loans from \cite{CC15} as well, we will restrict our analysis to debts rolling forward in time so as to simplify the discussion.

As noted above, in addition to the structure from \cite{BBF18}, the nominal liabilities will explicitly depend on the clearing wealths of the prior time(s), i.e., $L: \bbt \times \bbr^{(n+1) \times |\bbt|} \to \bbr^{(n+1) \times (n+1)}_+$.  Often, to make this difference explicit especially in examples, we consider the full nominal liabilities $L$ to be a combination of two components: a non-contingent component $L^0: \bbt \to \bbr^{(n+1) \times (n+1)}_+$ which is only a function of clearing times and a contingent component $L^c: \bbt \times \bbr^{(n+1) \times |\bbt|} \to \bbr^{(n+1) \times (n+1)}_+$ which is a function of both the clearing times and the past history (but only encodes the contingent payments based on the past history).  That is, $L = L^0 + L^c$.

As a descriptive consideration of the contingent obligations $L^c$, consider the insurance-based (Example~\ref{ex:insurance}) or credit default swap (Example~\ref{ex:CDS}) scenarios of the previous section.  For instance, a bank $j$ may purchase a credit default swap from bank $i$ on the failure of firm $k$ as described in Example~\ref{ex:CDS}.  As opposed to the simultaneous claims setting in Section~\ref{sec:insurance}, in this dynamic setting we consider an order of operations.  That is, first firm $k$ must fail at time $t-1$, and only then would the credit default swap be paid at time $t$.  This delay in payments is a reflection of the real financial system in which there is a time between a claim being made by bank $j$ to $i$ and the payment on that claim.  The payment due to this credit default swap would be incorporated in $L^c_{ij}$ but not $L^0_{ij}$.

Even with this important distinction, we can use the same methodology as in \cite{BBF18} to prove existence and uniqueness of the clearing wealths in this setting.  The following assumption, with the concept taken from \cite{BBF18} guarantees that all firms are solvent at the start of the system and that the system is a regular network as described by \cite{EN01}.
\begin{assumption}\label{ass:initial}
Before the time of interest, all firms are solvent and liquid.  That is, $V_i(-1) \geq 0$ for all firms $i \in \ncal_0$.  
Additionally, all firms have positive external cash flow or obligations to society at all times $t \in \bbt$, i.e., $x_i(t) + L_{i0}(t) > 0$ for all firms $i \in \ncal$ and all times $t \in \bbt$.
\end{assumption}

To incorporate the possibility of firms defaulting before the terminal time, let $\ncal_0^t(V_{t-1}) \subseteq \ncal_0$ denote the firms that are paying obligations at time $t \in \bbt$ based on the history of wealths up to time $t-1$.  In particular, we will assume that $\ncal_0^0(V_{-1}) := \ncal_0$ and $\ncal_0^{t+1}(V_t) \subseteq \ncal_0^t(V_{t-1})$ for any time $t$ and any wealths process $V$.  That is, all firms are deemed solvent at time $0$ as in Assumption~\ref{ass:initial} and no firm recovers from default.  This notion allows for a consideration in much the same manner as \cite{CC15}.  Mathematically this does not require further consideration than in \cite{BBF18} as $\ncal_0^t$ only depends on the history up to time $t-1$.  With this notation we can define $L_{ij}(t,V_{t-1}) = 0$ for all firms $j \in \ncal_0$ and $i \not\in \ncal_0^t(V_{t-1})$.  With the notion of an auction from \cite{CC15} it will also follow that $L_{ji}(t,V_{t-1}) = 0$ for all firms $j \in \ncal_0$ and $i \not\in \ncal_0^t(V_{t-1})$.
We define the total liabilities and relative liabilities at time $t \in \bbt$ as
\begin{align*}
\bar p_i(t,V_{t-1}) &:= \sum_{j \in \ncal_0} L_{ij}(t,V_{t-1}) + V_i(t-1)^-\\
\pi_{ij}(t,V_{t-1}) &:= \begin{cases} \frac{L_{ij}(t,V_{t-1}) + \pi_{ij}(t-1,V_{t-2})V_i(t-1)^-}{\bar p_i(t,V_{t-1})} &\text{if } \bar p_i(t,V_{t-1}) > 0 \\ \frac{1}{n} &\text{if } \bar p_i(t,V_{t-1}) = 0, \; j \neq i\\ 0 &\text{if } \bar p_i(t,V_{t-1}) = 0, \; j = i \end{cases} \quad \forall i,j \in \ncal_0.
\end{align*}
Then the clearing wealths must satisfy the following fixed point problem in time $t$ wealths:
\begin{equation}\label{eq:EN-discrete}
V(t) = V(t-1)^+ + x(t) + \Pi(t,V_{t-1})^\top \operatorname{diag}(\ind_{\{i \in \ncal_0^t(V_{t-1})\}}) \left[\bar p(t,V_{t-1}) - V(t)^-\right]^+ - \bar p(t,V_{t-1}).
\end{equation}

We proceed to reformulate the problem as in \cite{BBF18}. We consider a process of cash flows $c$ and functional relative exposures $A$.  These we define by 
\begin{align}
\nonumber c(t,V_{t-1}) &:= x(t) + L(t,V_{t-1})^\top \vec{1} - L(t,V_{t-1})\vec{1}\\
\label{eq:discrete-A} 
a_{ij}(t,V_t) &:= \begin{cases} \pi_{ij}(t,V_{t-1}) &\text{if } \bar p_i(t,V_{t-1}) \geq V_i(t)^-,\; i \in \ncal_0^t(V_{t-1})\\ \frac{L_{ij}(t,V_{t-1}) + a_{ij}(t-1,V_{t-1})V_i(t-1)^-}{V_i(t)^-} &\text{else}\end{cases} \quad \forall i,j \in \ncal_0.
\end{align}
That is, we consider $c(t,V_{t-1}) \in \bbr^{n+1}$ to be the vector of book capital levels at time $t$, i.e., the new wealth of each firm assuming all other firms pay in full. We can also consider $c_i(t,V_{t-1})$ to be the \emph{net cash flow} for firm $i$ at time $t$. We define the functional matrix $A: \bbt \times \bbr^{(n+1) \times |\bbt|} \to [0,1]^{(n+1) \times (n+1)}$ to be the relative exposure matrix.  That is, $a_{ij}(t,V_t)V_i(t)^-$ provides the (negative) impact that firm $i$'s losses have on firm $j$'s wealth at time $t \in \bbt$. This is in contrast to $\Pi$, the relative liabilities, in that it endogenously imposes the limited exposures concept.  This equivalent formulation provides mathematical simplicity to the analysis. A broader discussion of this reformulation is provided in \cite{BBF18}.

Thus the fixed point equation reduces to 
\begin{equation}\label{eq:discrete-V}
V(t) = V(t-1) + c(t,V_{t-1}) - A(t,V_t)^\top V(t)^- + A(t-1,V_{t-1})^\top V(t-1)^-.
\end{equation}

 With this setup we now wish to extend the existence and uniqueness results of \cite{EN01} to discrete time.
\begin{corollary}\label{thm:discrete}
Let $(c,L): \bbt \times \bbr^{(n+1) \times |\bbt|} \to \bbr^{n+1} \times \bbr^{(n+1) \times (n+1)}_+$ define a dynamic financial network such that every bank has cash flow at least at the level dictated by nominal interbank liabilities, i.e., $c_i(t,V_{t-1}) \geq \sum_{j \in \ncal} L_{ji}(t,V_{t-1}) - \sum_{j \in \ncal_0} L_{ij}(t,V_{t-1})$ for all times $t \in \bbt$ and all wealth processes $V$, and so that every bank owes to the societal node at all times $t \in \bbt$, i.e., $L_{i0}(t,V_{t-1}) > 0$ for all banks $i \in \ncal$, times $t \in \bbt$, and wealths $V$.  Under Assumption~\ref{ass:initial}, there exists a unique solution of clearing wealths $V: \bbt \to \bbr^{n+1}$ to \eqref{eq:discrete-V}.
\end{corollary}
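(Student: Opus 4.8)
The plan is to exploit the crucial structural feature of the dynamic model---that the time-$t$ fixed point \eqref{eq:EN-discrete} depends on the past only through the already-determined history $V_{t-1}$---and to argue by induction on $t \in \bbt$, reducing each time step to a static Eisenberg-Noe network to which \cite[Theorem~1]{EN01} applies. The base of the induction is supplied by Assumption~\ref{ass:initial}: the pre-history $V(-1) \geq 0$ is fixed and every firm starts solvent, so $\ncal_0^0(V_{-1}) = \ncal_0$. For the inductive step I assume the history $V_{t-1} = (V(s))_{s=-1}^{t-1}$ has already been uniquely determined; then the net cash flow $c(t,V_{t-1})$, the total liabilities $\bar p(t,V_{t-1})$, the relative liabilities $\Pi(t,V_{t-1})$, and the active set $\ncal_0^t(V_{t-1})$ are all fixed constants, and the sole unknown in \eqref{eq:EN-discrete} is $V(t)$, which enters only through $V(t)^-$.

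First I would recognize \eqref{eq:EN-discrete} as a static clearing problem of the form \eqref{eq:EN-V} with effective external endowment $\tilde x := V(t-1)^+ + x(t)$, relative liabilities $\Pi(t,V_{t-1})$, and total liabilities $\bar p(t,V_{t-1})$. Because $V(t-1)^+ \geq 0$ and $x(t) \geq 0$, this endowment is nonnegative; equivalently, writing out $c_i$ from its definition and using $L_{0i} \equiv 0$, the hypothesis $c_i(t,V_{t-1}) \geq \sum_{j \in \ncal} L_{ji}(t,V_{t-1}) - \sum_{j \in \ncal_0} L_{ij}(t,V_{t-1})$ collapses to $x_i(t) \geq 0$, which is exactly the nonnegative-endowment condition required for an Eisenberg-Noe network. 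I would also check, by induction, that $\Pi(t,V_{t-1})$ is row-stochastic on firms with positive total liabilities: summing the defining ratio over $j \in \ncal_0$ turns the numerator into $\sum_j L_{ij}(t,V_{t-1}) + V_i(t-1)^- \sum_j \pi_{ij}(t-1,V_{t-2})$, which equals $\bar p_i(t,V_{t-1})$ once the inductive hypothesis $\sum_j \pi_{ij}(t-1,V_{t-2}) = 1$ is invoked. With this, \eqref{eq:EN-discrete} is a genuine Eisenberg-Noe wealth fixed point, and solving it is equivalent, via the reformulation following \cite{BBF18}, to solving \eqref{eq:discrete-V}.

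Next I would invoke regularity to obtain uniqueness. The hypothesis $L_{i0}(t,V_{t-1}) > 0$ for every bank $i \in \ncal$ forces $\bar p_i(t,V_{t-1}) > 0$ and $\pi_{i0}(t,V_{t-1}) > 0$, so every participating node possesses a directed path of positive relative liabilities to the societal source node $0$. This is precisely the regular-network condition under which \cite[Theorem~1]{EN01} yields a unique clearing payment vector $p = [\bar p(t,V_{t-1}) - V(t)^-]^+$, and hence, by Proposition~\ref{prop:EN-V}, a unique clearing wealth $V(t)$. Under the rolling-debt restriction adopted in this section we have $\ncal_0^t(V_{t-1}) = \ncal_0$ for every $t$, so the diagonal indicator in \eqref{eq:EN-discrete} is the identity and no firm is excluded. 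This closes the induction; since each $V(t)$ is uniquely pinned down from $V_{t-1}$ by a forward recursion, the argument produces a unique clearing process $V: \bbt \to \bbr^{n+1}$ whether $\bbt$ is finite or $\bbt = \bbn$.

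The step I expect to be the main obstacle is the second one: verifying that the time-$t$ problem is honestly the Eisenberg-Noe system of \cite{EN01} despite the roll-forward terms. One must track how the unpaid shortfall $V_i(t-1)^-$ is folded into both $\bar p_i(t,V_{t-1})$ and the relative liabilities $\pi_{ij}(t,V_{t-1})$, confirm that this keeps $\Pi(t,V_{t-1})$ row-stochastic so that the reformulation is a well-posed pro-rata network, and confirm that the strictly positive societal obligation survives the roll-forward so that the network remains regular at every time. Once this identification is secured, existence and uniqueness at each step---and therefore for the whole process by the forward recursion---follow immediately from the static theory, because the time-$t$ subproblem is completely decoupled from all future times.
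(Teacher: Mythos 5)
Your proposal is correct and takes essentially the same route as the paper: the paper's own proof is a one-line deferral to the proof of Theorem 3.2 of \cite{BBF18}, which is precisely your argument --- forward induction over $t \in \bbt$, with each time step (given the already-determined history $V_{t-1}$) reduced to a static, regular Eisenberg-Noe system whose unique clearing solution follows from $L_{i0}(t,V_{t-1}) > 0$ and the nonnegative effective endowment $V(t-1)^+ + x(t)$. The only cosmetic slip is your attribution of $\ncal_0^t(V_{t-1}) = \ncal_0$ to the rolling-debt restriction; it actually follows from the hypothesis $L_{i0}(t,V_{t-1}) > 0$ for all $i \in \ncal$ (removed firms must have identically zero liabilities), though the induction would go through even with a shrinking active set since $\ncal_0^t$ depends only on $V_{t-1}$.
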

\begin{proof}
This follows directly from the proof of Theorem 3.2 of \cite{BBF18}.
\end{proof}

With the construction of the existence and uniqueness of the solution, we now want to emphasize the \emph{fictitious default algorithm} from \cite{EN01} to construct this clearing wealths vector over time.  This algorithm is nearly identical to that presented in \cite{BBF18}.  We note that at each time $t$ this algorithm takes at most $n$ iterations as is the case for the fictitious default algorithm originally presented in \cite{EN01}.  Thus with a terminal time $T$, this algorithm will construct the full clearing solution over $\bbt$ in $nT$ iterations.
\begin{alg}\label{alg:discrete}
Under the assumptions of Theorem~\ref{thm:discrete}, the clearing wealths process $V: \bbt \to \bbr^{n+1}$ can be found by the following algorithm.  
Initialize $t = -1$ and $V(-1) \geq 0$ as a given.  Repeat until $t = \max\bbt$:
\begin{enumerate}
\item Increment $t = t+1$.
\item \label{alg:v} Initialize $k = 0$, $V^0 = V(t-1) + c(t,V_{t-1})$, and $D^0 = \emptyset$.  Repeat until convergence:
\begin{enumerate}
\item Increment $k = k+1$;
\item Denote the set of illiquid banks by
$D^k := \left\{i \in \ncal_0^t(V_{t-1}) \; | \; V_i^{k-1} < 0\right\}$.
\item If $D^k = D^{k-1}$ then terminate and set $V(t) = V^{k-1}$.
\item Define the matrix $\Lambda^k \in \{0,1\}^{n \times n}$ so that
$\Lambda_{ij}^k = \begin{cases}1 &\text{if } i = j \in D^k \\ 0 &\text{else}\end{cases}$.
\item \label{alg:vk} Define $V^k = (I - \Pi(t,V_{t-1})^\top \Lambda^k)^{-1}\left(V(t-1) + c(t,V_{t-1}) + A(t-1,V_{t-1})^\top V(t-1)^-\right)$.
\end{enumerate}
\end{enumerate}
\end{alg}
As in \cite{BBF18}, in step~\eqref{alg:vk} of the fictitious default algorithm we are able to replace $A(t,V_t)$ with $\Pi(t,V_{t-1})$.  This is beneficial as it allows us to directly compute $V^k$ without requiring a fixed point problem.  We additionally note that the inclusion of defaulted banks only required the change that the fictitious set of illiquid banks is a subset of $\ncal_0^t(V_{t-1})$ at each time $t$.

\begin{remark}\label{flexibility}
The dynamic framework provides a flexible way to deal with contingent payments. In particular, we can have as many time steps as the number of contingent payment layers in the network. For example, to consider insurance we need to have two time points to incorporate the nominal claims and the insurance claims triggered by the clearing of these nominal claims. For reinsurance markets, we need three time steps, the third one to incorporate the reinsurance claims triggered by the clearing of the insurance claims. We feel this hierarchical resolution of the claims is widely observed in reality.
\end{remark}

\begin{remark}\label{dynamic bankruptcy}
One of the advantages of the dynamic framework is that it provides a natural way to include bankruptcy costs. This is a deviation from the static framework where we might not have existence of solutions for bankruptcy costs. However in the dynamic framework we can always determine the time point when the equity of a bank reaches zero and include the bankruptcy costs for the successive time periods. Hence the solution will exist and be unique.
\end{remark}

\begin{remark}\label{dynamic data dependence}
We can provide much stronger sensitivity results in this case, as compared to the static case. Since in this approach at every time step we get an Eisenberg-Noe system, the sensitivity results are a sequential application of Section 4 of \cite{EN01}.  Directional derivatives of the static Eisenberg-Noe approach have been considered in \cite{LS10,feinstein2017sensitivity}.
\end{remark}

We wish to finish this section by remarking on when the dynamic framework presented herein will provide a clearing solution from the simultaneous claim setting in the prior section.  
\begin{remark}\label{dynamic=static}
In general, the clearing solutions of the simultaneous claims framework will not coincide with the terminal clearing wealths of the dynamic framework.  These notions will, however, coincide if the relative liabilities are kept constant as a function of wealths and time. For a more detailed discussion see Section 5 of~\cite{BBF18}.  Other settings, as evidenced by the examples provided in the next section, may provide sufficient conditions for the dynamic framework to provide a clearing solution from the simultaneous clearing setting.  In particular, this will occur if the contingent payments do not strongly feedback into the network itself, e.g.\ if insurance is owed to an already solvent firm. However, we want to emphasize that the conditions under which the static and dynamic solutions coincide are very restrictive and in general this will not be the case.  This is appropriate given the shortcomings of the static setting as expressed in Section \ref{sec:disadvantages}.
\end{remark}

\subsection{Examples}
We now wish to provide three illustrative examples to demonstrate the value of the discrete time setting as a model over the static setting presented in Section~\ref{sec:insurance}.  These three examples correspond to simple networks in which the static setting has no clearing wealths, has multiple clearing wealths, and has a poor interpretation of the clearing wealths respectively.  We will show that in all three situations the discrete time model presented above provides a unique clearing wealth for which the interpretation of the results is as anticipated.

\begin{example}\label{ex:dynamic-nonexist}
We wish to consider a small network example in which the financial system does not admit a clearing solution in the static setting, but a unique and financially meaningful solution in the dynamic setting.  In this case we will consider a digital CDS.  That is, in the case the CDS is triggered, the payment is a fixed strictly positive value (herein set to be $1$), otherwise it pays out nothing. Immediately we can see that this is not a continuous payout and therefore does not automatically provide a clearing solution in the static setting (see Corollary~\ref{cor:exist-cont}), however we will still need to prove that there does not exist any solution.

Consider the network with $n = 3$ banks, and \emph{without} the societal node, depicted in Figure \ref{fig:nonexist}.  That is, bank 1 begins with $x_1 = 1$, bank 2 with $x_2 = 0$, and bank 3 with $x_3 = 2$ in external assets.  We consider the case in which $L_{12} \equiv 2$ and $L_{23} \equiv 1.5$ are fixed obligations whereas the first bank has purchased a digital credit default swap on the second institution defaulting on its obligations from the third institution, i.e.\ $L_{31}(V) = \ind_{\{V_2 < 0\}}$.  No other exposures exist within this system.  The system of wealths must therefore satisfy
\begin{align*}
V_1 &= 1 + (\ind_{\{V_2 < 0\}} - V_3^-)^+ - 2\\
V_2 &= (2-V_1^-)^+ - 1.5\\
V_3 &= 2 + (1.5 - V_2^-)^+ - \ind_{\{V_2 < 0\}}.
\end{align*}
To show that no clearing solution exists to this system, we will consider the two possible settings: bank 2 is solvent or bank 2 has negative wealth.
\begin{enumerate}
\item Assume bank 2 is solvent, i.e.\ $\ind_{\{V_2 < 0\}} = 0$.  We can compute a unique solution to the clearing wealths $V = (-1,-0.5,3)^\top$.  However, since this violates our assumption that $V_2 \geq 0$, this cannot be a clearing solution to the full problem.
\item Assume bank 2 is insolvent, i.e.\ $\ind_{\{V_2 < 0\}} = 1$.  We can compute a unique solution to the clearing wealths $V = (0,0.5,2.5)^\top$.  However, since this violates our assumption that $V_2 < 0$, this cannot be a clearing solution to the full problem.
\end{enumerate}
As no other possible clearing solutions can exist, it must be the case that there does not exist a clearing solution to this \emph{static} financial system.
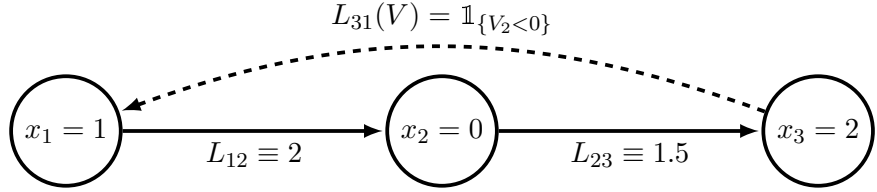
\begin{figure}[h!]
\centering
\begin{tikzpicture}
\tikzset{node style/.style={state, minimum width=0.36in, line width=0.5mm, align=center}}
\node[node style] at (0,0) (x1) {$x_1 = 1$};
\node[node style] at (5,0) (x2) {$x_2 = 0$};
\node[node style] at (10,0) (x3) {$x_3 = 2$};

\draw[every loop, auto=right, line width=0.5mm, >=latex]
(x1) edge node {$L_{12} \equiv 2$} (x2)
(x2) edge node {$L_{23} \equiv 1.5$} (x3)
(x3) edge[bend right=20,dashed] node {$L_{31}(V) = \ind_{\{V_2 < 0\}}$} (x1);
\end{tikzpicture}
\caption{Example~\ref{ex:dynamic-nonexist}: A graphical representation of the network model with 3 banks which has no clearing in a static setting.}
\label{fig:nonexist}
\end{figure}

Now we wish to consider the same example but in the discrete time framework with $\bbt = \{0,1\}$.  Here we will consider all possible divisions of the external assets over the two time points.  Formally, define $x^{\epsilon}(0) = (\epsilon,0,1)^\top$ and $x^{\epsilon}(1) = (1-\epsilon,0,1)^\top$ for any $\epsilon \in [0,1]$.  Note that, by the topology of this network, it is a regular network (as defined in \cite{EN01}) for any choice of $\epsilon \in [0,1]$ as required by Assumption~\ref{ass:initial}.  In any scenario, define $L_{12}(0) = 2$ and $L_{23}(0) = 1.5$ with no other obligations at time $0$.  The only new obligation owed at time 1 is the contingent payment from bank 3 to 1, i.e., $L_{31}(1,V_0) = \ind_{\{V_2(0) < 0\}}$ with no other new obligations at time $1$.  Further, all scenarios will be assumed to start from zero wealths (thus satisfying Assumption~\ref{ass:initial}).  We can easily compute the \emph{unique} clearing wealths under $x^{\epsilon}$ (assuming no firms are removed from the system) as $V^{\epsilon}(0) = (\epsilon-2,\epsilon-1.5,\epsilon+1)^\top$ and (noting that $V_2^{\epsilon}(0) < 0$ for any $\epsilon \in [0,1]$) $V^{\epsilon}(1) = (0,0.5,2.5)^\top$.  We note that this clearing solution is identical to the proposed \emph{static} wealths under the assumption that bank 2 is insolvent.  Additionally, the final wealths are independent of the choice of $\epsilon$.
\end{example}

\begin{example}\label{ex:dynamic-nonunique}
Consider again Example~\ref{ex:nonunique} with three banks.  In the static solution this was encoded by the parameters: external assets of $x = (0,3/16,0)^\top$ and sparse liabilities provided by $L_{23} = L_{32} \equiv 1$ and $L_{21}(V) = V_3^-$.  Two clearing solutions existed, $V^* = (0,3/16,0)^\top$ and $V^* = (3/16,-21/16,-3/4)^\top$.  

Now we wish to consider the same example but in the discrete time framework with $\bbt = \{0,1\}$.  Here we will consider all possible divisions of the external assets over the two time points.  Formally, define $x^{\epsilon}(0) = (0,\epsilon,0)^\top$ and $x^{\epsilon}(1) = (0,3/16-\epsilon,0)^\top$ for any $\epsilon \in (0,3/16]$ to guarantee the uniqueness of the clearing solutions as a regular network from \cite{EN01} (and as required from Assumption~\ref{ass:initial} and since no societal node is included in this example).  In any scenario, define $L_{23}(0) = L_{32}(0) \equiv 1$ with no other obligations at time $0$.  The only new obligation owed at time 1 is the contingent payment from bank 2 to 1, i.e., $L_{21}(1,V_0) = V_3(0)^-$ with no other new obligations at time $1$.  Further, all scenarios will be assumed to start from zero wealths (thus satisfying Assumption~\ref{ass:initial}).  We can easily compute the \emph{unique} clearing wealths under $x^{\epsilon}$ as $V^{\epsilon}(0) = (0,\epsilon,0)^\top$ and $V^{\epsilon}(1) = (0,3/16,0)^\top$.  We note that this clearing solution is identical to the first clearing wealths solution of the static system and is independent of the choice of $\epsilon$.
\end{example}

\begin{example}\label{ex:interpretation}
Finally, we want to consider a simple financial system to demonstrate the issues discussed in Section~\ref{sec:disadvantages} surrounding the static framework.  We will then use this same network in the discrete time framework to find a unique, financially meaningful, clearing solution.  To do so, consider a bank who takes out an insurance payment on its own losses.  As discussed in Section~\ref{sec:disadvantages}, while the insured bank may, rightly, assume that their total losses will be made whole, in a static setting this will not happen.  However, in the dynamic framework this does occur appropriately.

Consider the network with $n = 3$ banks, and \emph{without} the societal node, depicted in Figure \ref{fig:interpretation}.  That is, bank 1 begins with $x_1 = 1$, bank 2 with $x_2 = 0$, and bank 3 with $x_3 = 2$ in external assets.  We consider the case in which $L_{12} \equiv 2$ and $L_{23} \equiv 1.5$ are fixed obligations whereas the first bank has purchased insurance on their own losses from the third institution, i.e.\ $L_{31}(V) = V_1^-$.  No other exposures exist within this system.  The system of wealths must therefore satisfy
\begin{align*}
V_1 &= 1 + (V_1^- - V_3^-)^+ - 2\\
V_2 &= (2-V_1^-)^+ - 1.5\\
V_3 &= 2 + (1.5 - V_2^-)^+ - V_1^-.
\end{align*}
Without the insurance payment, the first bank will default with wealths of $V = (-1,-0.5,3)^\top$.  However, if the insurance is paid out in full then the first bank is made whole and the resultant wealths are $V = (0,0.5,2.5)$.  In this case, the first bank does not default, which raises the question whether any insurance payment is to be made at all.  Neither of these are clearing solutions as the system would infinitely cycle between needing insurance or not.  The clearing wealths, instead, are given by $V = (-0.5,0,3)^\top$.  That is, bank 1 will have a shortfall midway between its wealth with and without the insurance being paid.  This, though, is \emph{not} the notion that a firm purchasing insurance would expect as it cannot make them whole.
\begin{figure}[h!]
\centering
\begin{tikzpicture}
\tikzset{node style/.style={state, minimum width=0.36in, line width=0.5mm, align=center}}
\node[node style] at (0,0) (x1) {$x_1 = 1$};
\node[node style] at (5,0) (x2) {$x_2 = 0$};
\node[node style] at (10,0) (x3) {$x_3 = 2$};

\draw[every loop, auto=right, line width=0.5mm, >=latex]
(x1) edge node {$L_{12} \equiv 2$} (x2)
(x2) edge node {$L_{23} \equiv 1.5$} (x3)
(x3) edge[bend right=20,dashed] node {$L_{31}(V) = V_1^-$} (x1);
\end{tikzpicture}
\caption{Example~\ref{ex:interpretation}: A graphical representation of the network model with 3 banks which has poor interpretation in a static setting.}
\label{fig:interpretation}
\end{figure}
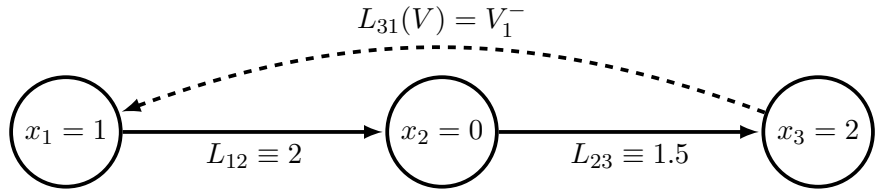

Now we wish to consider the same example but in the discrete time framework with $\bbt = \{0,1\}$.  Here we will consider all possible divisions of the external assets over the two time points.  Formally, define $x^{\epsilon}(0) = (\epsilon,0,1)^\top$ and $x^{\epsilon}(1) = (1-\epsilon,0,1)^\top$ for any $\epsilon \in [0,1]$.  Note that, by the topology of this network, it is a regular network (as defined in \cite{EN01}) for any choice of $\epsilon \in [0,1]$ as required by Assumption~\ref{ass:initial}.  In any scenario, define $L_{12}(0) = 2$ and $L_{23}(0) = 1.5$ with no other obligations at time $0$.  The only new obligation owed at time 1 is the contingent payment from bank 3 to 1, i.e., $L_{31}(1,V_0) = V_1(0)^-$ with no other new obligations at time $1$.  Further, all scenarios will be assumed to start from zero wealths (thus satisfying Assumption~\ref{ass:initial}).  We can easily compute the \emph{unique} clearing wealths under $x^{\epsilon}$ (assuming no firms are removed from the system) as $V^{\epsilon}(0) = (\epsilon-2,\epsilon-1.5,\epsilon+1)^\top$ and (noting that $V_1^{\epsilon}(0) < 0$ for any $\epsilon \in [0,1]$) $V^{\epsilon}(1) = (1-\epsilon,0.5,1.5+\epsilon)^\top$.  We note that, in the case that $\epsilon = 1$, this clearing solution is identical to the proposed \emph{static} wealths when the insurance is paid in full.  As opposed to the prior examples, here the final wealths are a function of $\epsilon$.
\end{example}

\section{Conclusion}\label{sec:conclusion}

In this paper we consider an extension of the network model of \cite{EN01} to include contingent payments viz.\ insurance and CDSs with endogenous reference entities. We first study these contingent payments in a static, simultaneous claims, framework and develop conditions to provide existence and uniqueness of the clearing wealths.  Further, sensitivity analysis and financial implications are considered in this setting. We find that the static framework is suitable only for a certain class of networks and we cannot guarantee the existence of a clearing solution beyond these systems. Indeed the problem often becomes ill-defined from a financial standpoint. Hence we introduce the dynamic framework and show that we can get existence and uniqueness under very mild assumptions. Further we show that the problems which could not be solved in the simultaneous claims framework can be studied with this dynamic approach.

A clear extension of this model would be to include illiquid assets as discussed in, e.g., \cite{AFM16,feinstein2015illiquid,feinstein2016leverage} along with financial derivatives on these illiquid assets, i.e., options. These derivatives fall under the general class of contingent payments and can be used a tool for either hedging (insurance) or speculation. Due to the possibility of speculation, in such a setting a firm may have incentives to attempt to precipitate a fire sale and collect profit from the derivatives. 

\appendix
\section{Proof of Proposition~\ref{prop:continuous}}\label{sec:discrete-proofs}

\begin{proof}
Firstly, as in~\eqref{eq:EN-insurance}, the clearing wealths as a function of initial endowments are defined by
\[V(x) = x + \Pi(V(x))^\top [\bar p(V(x)) - V(x)^-]^+ - \bar p(V(x)).\]
We will prove continuity by utilizing the closed graph theorem (see, e.g., \cite[Theorem 2.58]{AB07}) noting that Proposition~\ref{prop:V-bounded} provides us with the condition that the clearing wealths map into a compact set.  Theorem 4 of \cite{MR94} immediately provides the monotonicity of the clearing wealths.

Fix $x \in \bbr^{n+1}_+$ and let $\xcal = x + [-1,1]^{n+1}$ be a closed compact neighborhood of $x$ in the full Euclidean space $\bbr^{n+1}$.  Then we can define $V^x: \xcal \to \bbr^{n+1}$ as the restriction (and possible expansion to negative terms) of the domain of $V$ to $\xcal$.  The graph of $V^x$ is given by:
\[\operatorname{graph}V^x = \left\{(\hat x,\hat V) \in \xcal \times \prod_{i \in \ncal_0} [x_i-1-\sum_{j \in \ncal_0} \bar L_{ij} , x_i+1+\sum_{j \in \ncal} \bar L_{ji}] \; | \; \hat V = \hat x + \Pi(\hat V)^\top [\bar p(\hat V) - \hat V^-]^+ - \bar p(\hat V)\right\}.\]
To see that $\operatorname{graph}V^x$ is closed let $(\hat x^k,\hat V^k)_{k \in \bbn} \subseteq \operatorname{graph}V^x \to (\hat x,\hat V)$, then immediately
\[\hat V = \lim_{k \to \infty} \hat V^k = \lim_{k \to \infty} \left[\hat x^k + \Pi(\hat V^k)^\top [\bar p(\hat V^k) - (\hat V^k)^-]^+ - \bar p(\hat V^k)\right] = \hat x + \Pi(\hat V)^\top [\bar p(\hat V) - \hat V^-]^+ - \bar p(\hat V)\]
by continuity of the nominal liabilities matrix $L$.  Therefore by the closed graph theorem we immediately recover that $V^x$ is continuous for any $x \in \bbr^{n+1}_+$, which implies that $V$ is continuous at any $x$ as well and thus $V: \bbr^{n+1}_+ \to \bbr^{n+1}$ is a continuous mapping.
\end{proof}

\bibliographystyle{plain}
\bibliography{bibtex25}

\end{document}